\newtheorem{theorem}{Theorem}
\newtheorem{lemma}{Lemma}
\newtheorem{example}{Example}
\begin{document}
%
\title{An Achievable Rate Region for the Two-Way Multiple Relay Channel}
%
%
%

\author{\IEEEauthorblockN{Jonathan Ponniah~\IEEEmembership{Member,~IEEE,}
}
\IEEEauthorblockA{Department of Electrical and Computer Engineering\\
 Texas A\&M University
}
\and
\IEEEauthorblockN{Liang-Liang Xie~\IEEEmembership{Fellow,~IEEE,}
}
\IEEEauthorblockA{Department of Electrical and Computer Engineering\\
University of Waterloo
}\thanks{This material is based upon work partially supported by NSF Contract CNS-1302182,  AFOSR Contract FA9550-13-1-0008, and NSF Science \& Technology Center Grant CCF-0939370.}}

%



\maketitle

\begin{abstract}
An achievable rate-region for the two-way multiple-relay channel is proposed using decode-forward block Markovian coding.  We identify a fundamental tension between the information flow in both directions that leads to an intractable number of decode-forward schemes and achievable rate regions, none of which are universally better than the others.  We introduce a new concept in decode-forward coding called ranking, and discover that each of these rate regions are different realizations of a single expression that depends on the rank assignment.  This discovery makes it possible to characterize the complete achievable rate region that includes all of the interesting decode-forward schemes and corresponding rate regions.  
\end{abstract}


%
\IEEEpeerreviewmaketitle
\section{Introduction}

The two-way multiple relay channel (TWMRC) is implicit in almost all forms of modern communication.  At the most fundamental level, it models the simultaneous, interactive, bidirectional exchange of information between two source/destination pairs with the assistance of intermediate relay nodes; the type of exchange that occurs over the internet for instance, where every single received data packet is acknowledged.  We consider the most general form of the TWMRC which allows the transmissions of every node to influence the received signal of each individual node.

This paper extends a history of previous work in network information theory beginning with the two-way channel \cite{Shannon}, followed by the one-way relay channel \cite{ElGamal}, the one-way multiple-relay channel \cite{Xie2005}, the one-way multiple-access relay channel \cite{Sankar}, the two-way, one-relay channel \cite{Xie2007}, and the two-way, two-relay channel \cite{Ponniah}.  

The channel model of interest is the discrete memoryless channel consisting of $M$ nodes labeled from $1,\ldots,M$.  The input-output dynamics are expressed as follows: 
\begin{align}
	\nonumber
	({\cal X}_{1}\times\ldots\times{\cal X}_{M},p(y_{1},\ldots,y_{M}|x_{1},\ldots,x_{M}),{\cal Y}_{1}\times\ldots\times{\cal Y}_{M}).
\end{align}
That is, at every time instant $t=1,2,\ldots$, the outputs $y_{1}(t),\ldots,y_{M}(t)$ received by the $M$ nodes respectively only depend on the inputs $x_{1}(t),\ldots,x_{M}(t)$ transmitted by the $M$ nodes at the same time according to $p(y_{1}(t),\ldots,y_{M}(t)|x_{1}(t),\ldots,x_{M}(t))$.  If node $i$ is a relay, the input $x_{i}(t)$ into the channel at time $t$ depends only on the symbols received in the previous time instants, so that $x_{i}(t)=F_{i,t}(y_{i}(t-1),y_{i}(t-2),\ldots)$ for all $t$, where $F_{i,t}$ can be any causal function.  

Consider the multi-source, multi-relay, all-way channel in which each source is interested in the messages generated by all the other sources (the TWMRC is a special case of this channel).  In a \textit{decode-forward scheme} the messages generated by each source are successively decoded and forwarded by each relay in the channel before arriving at any destination.  The order in which the nodes forward the message from a particular source defines a \textit{path}.  Let ${\cal S}$ denote the set of source nodes.  For each $s\in{\cal S}$, let the vector ${\bar p}_{s}$ denote the fixed path assigned to source $s$ and let ${\bar p}:=\{{\bar p}_{s}: s\in{\cal S}\}$.  The first element of ${\bar p}_{s}$ is source node $s$, and each subsequent element is the next node on the path.  Let $p_{s,k}$ denote the $k$th element of ${\bar p}_{s}$.   
\begin{example}
	Set $M:=5$ and ${\cal S}:=\{1,5\}$.  Suppose the messages $w_{1}$ and $w_{5}$ follow the paths $1\rightarrow3\rightarrow2\rightarrow4\rightarrow5$ and $5\rightarrow2\rightarrow3\rightarrow4\rightarrow1$ respectively.  Then ${\bar p}_{1}=(1,3,2,4,5)$, ${\bar p}_{5}=(5,2,3,4,1)$, $p_{1,2}=3$, and ${\bar p}:=\{{\bar p}_{1},{\bar p}_{5}\}$. 
\end{example}

For any fixed ${\bar p}$, let $P_{i}(s)$ denote the set of nodes that precede node $i$ on the path ${\bar p}_{s}$.  Suppose node $i$ is the $k^{th}$ node on the path ${\bar p}_{s}$.  That is, $i=p_{s,k}$.  Then $P_{i}(s):=\{p_{s,j}: 1\leq j\leq k-1\}$.  For any non-empty subset $S\subseteq{\cal S}$, let $P_{i}(S)=\displaystyle\bigcup_{s\in S} P_{i}(s)$ and $\tilde{P}_{i}(S):=\{k\notin P_{i}(S)\}$.
\begin{example}
	Set $M:=5$ and ${\cal S}:=\{1,5\}$. Set ${\bar p}_{1}:=(1,3,2,4,5)$ and ${\bar p}_{5}:=(5,2,3,4,1)$.   Then $P_{2}(\{1\})=\{1,3\}$, $P_{2}(\{5\})=\{5\}$, and $P_{2}(\{1,5\})=\{1,3,5\}$.  Similarly, $\tilde{P}_{2}(\{1\})=\{2,4,5\}$, $\tilde{P}_{2}(\{5\})=\{1,2,3,4\}$, and $\tilde{P}_{2}(\{1,5\})=\{2,4\}$.
\end{example}

Set $|{\cal S}|:=N$ for some $N\leq M$, and ${\cal S}:=\{s_{1},\ldots,s_{N}\}$, where $s_{j}\in\{1,\ldots,M\}$ for each $j=1,\ldots,N$.  For any ${\bar p}$ and non-empty $S\subseteq{\cal S}$ define the constraint:  
\begin{align}
	\label{decodeforwardupperbound}
	R_{S}<I(X_{P_{i}(S)};Y_{i}|X_{\tilde{P}_{i}(S)}),
\end{align}
where $R_{S}$ denotes the sum $\sum_{j\in S}R_{j}$, and $X_{S}$ denotes the set $\{X_{j}: j\in S\}$.   Fix ${\bar p}$ and let ${\cal R}_{i}({\bar p})$ be the set of rate vectors ${\bar R}:=(R_{1},\ldots,R_{N})$ that satisfy (\ref{decodeforwardupperbound}) for all non-empty $S\subseteq{\cal S}$ if $i\notin{\cal S}$ and all non-empty $S\subseteq{\cal S}\setminus i$ if $i\in{\cal S}$.

A rate vector ${\bar R}:=(R_{1},\ldots,R_{N})$ for a multi-source, multi-relay, all-way channel with $N$ sources is \textit{achievable} by definition, if there exists an encoding/decoding scheme that allows source node $s_{j}$ for each $j\in\{1,\ldots,N\}$ to send information at rate $R_{j}$ to all the other sources with an arbitrarily small probability of error.  An outer-bound on the region of achievable rate vectors that can be recovered using decode-forward schemes in this channel is given by: 
\begin{align}
	\label{decodeforwardrateregion}
	{\cal C}_{d}:=\displaystyle\bigcup_{{\bar p}}\displaystyle\bigcap^{M}_{i=1}{\cal R}_{i}({\bar p}).
\end{align}

To observe the difference between (\ref{decodeforwardrateregion}) and the cut-set outer-bound, replace (\ref{decodeforwardupperbound}) with the following constraint:
\begin{align}
	\label{cutset}
	R_{S}<I(X_{P_{i}(S)};Y_{\tilde{P}_{i}(S)}|X_{\tilde{P}_{i}(S)}).
\end{align}
Assume multi-way communication so that $N\geq2$.  For any ${\bar p}$, let $\hat{{\cal R}}_{i}({\bar p})$ be the set of rate vectors ${\bar R}:=(R_{1},\ldots,R_{N})$ that satisfy (\ref{cutset}) for all non-empty $S\subset{\cal S}$ if $i\notin{\cal S}$ and all non-empty $S\subseteq{\cal S}\setminus i$ if $i\in{\cal S}$.  A key difference between $\hat{{\cal R}}_{i}({\bar p})$ and ${\cal R}_{i}({\bar p})$ is that the former requires $S$ to be a strict subset of ${\cal S}$ for $N\geq2$ since there is no cut that puts all the sources on the same side in multi-way communication.  The cut-set outer-bound corresponds to the following region:
\begin{align}
	\label{cutsetupperbound}
	{\cal C}:=\displaystyle\bigcap_{{\bar p}}\displaystyle\bigcap^{M}_{i=1}\hat{{\cal R}}_{i}({\bar p}).
\end{align}

The regions (\ref{cutsetupperbound}) and (\ref{decodeforwardrateregion}) coincide if $N=1$ and there is one path ${\bar p}$ over which the channel is physically degraded.  In the one-way multi-relay channel (OWMRC), the two-way one-relay channel, and the three-way broadcast channel \cite{Xie2007},  ${\cal C}_{d}$ can be achieved for all joint-distributions in the first case and all product distributions in the latter two cases.  The key feature of the channel that is exploited in \cite{Xie2005} is the unidirectional flow of information. Node $i$ can remove the interference generated by the nodes in $\tilde{P}_{i}(S)$ because these nodes being downstream of node $i$, transmit messages already decoded by $i$.  

It turns out that any attempt to recover the rate vectors in ${\cal C}_{d}$ for multi-way channels with two or more relays encounters a fundamental tension between the information flow in one direction and the information flow in the opposite direction.  This tension is illustrated in the two-way two-relay channel.  

\begin{example}
	The two-way two-relay channel consists of nodes $\{1,2,3,4\}$.  Define the set of source nodes as ${\cal S}:=\{1,4\}$ and consider the paths ${\bar p}_{1}:=(1,2,3,4)$ and ${\bar p}_{4}:=(4,3,2,1)$.  In order to decode a message $w_{1}$ from node 1 at the rate $R_{1}<I(X_{1};Y_{2}|X_{2}X_{3}X_{4})$, node 2 needs to know the message $w_{4}$ simultaneously transmitted by node 4.  But $w_{4}$ is new information and node 2 does not know it a priori.  Hence we have the following requirement:  
\begin{itemize}
	\item[(i)] Node 3 must first decode and forward $w_{4}$ before node 2 decodes $w_{1}$.  
\end{itemize}
However, the reverse situation occurs when node 3 tries to decode $w_{4}$ from node 4 at rate $R_{4}<I(X_{4};Y_{3}|X_{1}X_{2}X_{3})$.  Then we have the following requirement
\begin{itemize}
	\item[(ii)] Node 2 must first decode and forward $w_{1}$ before node 3 decodes $w_{4}$.
\end{itemize}
It is impossible to simultaneously satisfy (i) and (ii); either (ii) is satisfied at the expense of (i) or (ii) is satisfied at the expense of (i). 
\end{example}

Any attempt to recover ${\cal C}_{d}$ leads to a decode-forward scheme that decides, at each relay, the extent to which the OWMRC is simulated in one direction at the expense of the other direction.  Each decision prevents the decode-forward scheme from recovering some of the rate pairs in ${\cal C}_{d}$.  This tension generates many different decode-forward schemes and rate regions, all of which cumulatively fail to recover ${\cal C}_{d}$ and none of which include the others.  Furthermore, the regions recovered by each of these decode-forward schemes share no obvious pattern.  As a result, it becomes intractable to explicitly characterize the rate region that includes all possible decode-forward schemes for an arbitrary number of nodes.

The main contribution of this paper is the discovery that the rate region corresponding to any decode-forward scheme that attempts to recover ${\cal C}_{d}$ in the TWMRC is a particular realization of a single expression.  This expression depends on the \textit{rank assignment}, where the rank assigned to each node is determined by the decode-forward scheme that attempts to recover ${\cal C}_{d}$.  This property makes it possible to characterize all the interesting decode-forward schemes by describing the set of rankings instead.  

This paper focuses on the decode-forward relay scheme. Another important but fundamentally different relay scheme originally proposed in \cite{ElGamal} is the compress-forward scheme, which has also been successfully extended to more general networks in \cite{NoisyNetworkCoding} and \cite{WuXie}. It is well known that neither decode-forward nor compress-forward is absolutely better than the other, and their relative superiority depends on the network topology in general \cite{Schein}. However, for the two-way traffic considered in the paper, especially when relay nodes are evenly placed in between, it is arguably clear that decode-forward performs better.

The rest of this paper is organized as follows:  Section \ref{RankingSection} describes all valid rankings, Section \ref{mainresult} states the main result which is the complete achievable rate region, Section \ref{sectionproof} proves the main result and section \ref{conclusion} concludes the paper.

\section{Ranking}
\label{RankingSection}
Given a multi-source multi-relay all-way channel with $M$ nodes, a \textit{rank} index is a number between $1\ldots M$ uniquely assigned to each node.  The nodes are also labeled from $1,\ldots,M$ but the rank indices are distinct from the labels.  If node $i$ is assigned rank $k$, then ${\rm rank}(i)=k$.  A rank assignment $\bar{r}$ is a one-to-one mapping of rank indices to labels and is represented by an $M$-dimensional vector where ${\bar r}=(r_{1},r_{2},\ldots,r_{M})$ and $r_{i}={\rm rank}(i)$.  The rank indices are ordered by the binary relations ``$>$'', ``$<$'', ``$=$'', ``$\geq$'', and ``$\leq$''.  These relations retain their usual meaning in the sense that $M>M-1>\cdots>2>1$ and $1<2<\cdots<M-1<M$.  There is no ordering defined on the labels.  

The analysis in the sequel will be limited to the TWMRC with ${\cal S}:=\{1,M\}$ and path vector ${\bar p}:=\{{\bar p}_{1},{\bar p}_{M}\}$ where ${\bar p}_{1}:=(1,2,\ldots,M-1,M)$ and ${\bar p}_{M}:=(M,M-1,\ldots,2,1)$.  A path-rank-assignment pair $({\bar p},{\bar r})$ is \textit{valid} by definition if there is only one local minimum (with respect to the rank indices) over both paths ${\bar p}_{1}$ and ${\bar p}_{M}$.  More precisely, for each $s\in{\cal S}$, let $i_{s}:=\displaystyle\arg\min_{k}{\rm rank}(p_{s,k})$.  That is, the $i_{s}^{th}$ element of ${\bar p}_{s}$ has the lowest rank.  Then $({\bar p},{\bar r})$ is valid by definition if it satisfies the following conditions for all $s\in {\cal S}$: ${\rm rank}(p_{s,i})>{\rm rank}(p_{s,i+1})$ for all $i<i_{s}$ and ${\rm rank}(p_{s,i})>{\rm rank}(p_{s,i-1})$ for all $i>i_{s}$.  The notation ${\bar v}=({\bar p},{\bar r})$ will be used to denote a valid path-rank-assignment pair $({\bar p},\bar{r})$.  

\begin{example}
	The following are examples of rank assignments ${\bar r}$ that correspond to a valid $({\bar p},{\bar r})$ when $M=8$: {\rm (1,2,3,4,5,6,7,8)}, {\rm (8,7,6,5,4,3,2,1)}, {\rm (8,6,4,2,1,3,5,7)}, {\rm (8,7,6,4,3,1,2,5)}, {\rm (7,6,4,3,1,2,5,8)}.  
\end{example}

Let ${\cal V}$ denote the set of all valid path-rank-assignment pairs.  For a fixed ${\bar v}\in{\cal V}$ and each $s\in{\cal S}$, let $u(i,s)$ denote the ``one-hop'' predecessor \textit{upstream} of $i$ on the path ${\bar p}_{s}$.  More precisely, suppose $p_{s,k}=i$ for some $k$.  Then $u(i,s)=p_{s,k-1}$.  Let $U(i)=\{u(i,s): s\in{\cal S}\}$ denote the set of all one-hop predecessor nodes upstream of node $i$.  Furthermore, at each relay $i\notin{\cal S}$ define the reference node with respect to $i$ as the highest ranked one-hop predecessor upstream to node $i$ over all paths.  That is, ${\rm ref}(i):=\arg\displaystyle\max_{j\in U(i)}{\rm rank}(j)$. 

\begin{example}
	Set $M:=5$ and paths ${\bar p}_{1}:=(1,2,3,4,5)$ and ${\bar p}_{5}:=(5,4,3,2,1)$. Define the rank assignment ${\bar r}:=(5,3,2,1,4)$.  Then $u(2,1)=1$, $u(2,5)=3$ and $U(2)=\{1,3\}$.  It follows that ${\rm ref}(2)=1$.  Similarly, $u(3,1)=2$, $u(3,5)=4$, and $U(3)=\{2,4\}$.  It follows that ${\rm ref}(3)=2$.   
\end{example}

The valid path-rank-assignments capture all of the ways in which the tension between two opposing information flows is resolved in a decode-forward scheme when ${\bar p}:=\{{\bar p}_{1},{\bar p}_{M}\}$.  The definition of a valid path-rank-assignment thus far has been limited to the paths ${\bar p}_{1}:=(1,2,3,\ldots,M-1,M)$ and ${\bar p}_{M}:=(M,M-1,\ldots,3,2,1)$.  The symmetry of these paths simplifies the characterization of the valid path-rank-assignments.  In future work, we will define the valid path-rank-assignments over arbitrary paths.

\section{Main Result}
\label{mainresult}
Consider the TWMRC with $M$ nodes and ${\cal S}:=\{1,M\}$.  For a fixed ${\bar v}\in{\cal V}$ and any non-empty $S\subseteq{\cal S}$ define the higher orthant set (the orthant set ``above'') $A_{i}(S)$, as the set of nodes in $P_{i}(S)$ of rank higher than the reference node ${\rm ref}(i)$.  Similarly, define the lower orthant set (the orthant set ``below'') $B_{i}(S)$, as the set of nodes in $P_{i}(S)$ of rank lower than or equal to the reference node.  Finally, define the lower orthant set $\tilde{B}_{i}(S)$ as the set of nodes in $\tilde{P}_{i}(S)$ of rank lower than or equal to the reference node.  The orthant sets can be expressed as follows:
\begin{align}
	\nonumber
	A_{i}(S)&:=\{j\in P_{i}(S)\mid {\rm rank}(j)> {\rm rank}({\rm ref}(i))\},\\
	\nonumber
	B_{i}(S)&:=\{j\in P_{i}(S)\mid {\rm rank}(j)\leq{\rm rank}({\rm ref}(i))\},\\
	\nonumber
	\tilde{B}_{i}(S)&:=\{j\in \tilde{P}_{i}(S)\mid {\rm rank}(j)\leq{\rm rank}({\rm ref}(i))\}.
\end{align}
Let $L(i)$ denote the set of nodes of strictly lower rank than $i$.  That is, $L(i):=\{j\mid {\rm rank}(j)<{\rm rank}(i)\}$.  
\begin{example}
	Set $M:=5$, ${\bar p}_{1}:=(1,2,3,4,5)$, ${\bar p}_{5}:=(5,4,3,2,1)$, and ${\bar r}:=(5,3,1,2,4)$.  
	Then ${\rm ref}(4)=5$, $L(4)=\{3\}$, $A_{4}(\{1\})=\{1\}$, $A_{4}(\{5\})=\{\}$, $A_{4}(\{1,5\})=\{1\}$, $B_{4}(\{1\})=\{2,3\}$, $B_{4}(\{5\})=\{5\}$, $B_{4}(\{1,5\})=\{2,3,5\}$, $\tilde{B}_{4}(\{1\})=\{4,5\}$, $\tilde{B}_{4}(\{5\})=\{2,3,4\}$, and $\tilde{B}_{4}(\{1,5\})=\{4\}$. 
\end{example}

For any non-empty $S\subseteq{\cal S}$ and ${\bar v}\in{\cal V}$ define the constraint: 
\begin{align}
	\label{lowerbound}
	R_{S}<\displaystyle\sum_{j\in A_{i}(S)}I(X_{j};Y_{i}|X_{L(j)})+I(X_{B_{i}(S)};Y_{i}|X_{\tilde{B}_{i}(S)}).
\end{align}

Let ${\cal R}_{i}({\bar v})$ be the set of rate pairs $(R_{1},R_{M})$ that satisfy (\ref{lowerbound}) for all non-empty $S\subseteq{\cal S}$ if $i\notin{\cal S}$ and satisfy (\ref{decodeforwardupperbound}) for all non-empty $S\subseteq{\cal S}\setminus i$ if $i\in{\cal S}$.  We have the following theorem.

\begin{theorem}
	\label{maintheorem}
	For any product distribution $p(x_{1})\cdots p(x_{M})$ the following set of rate pairs is achievable:
	\begin{align}
		\nonumber
		\bigcup_{{\bar v}\in{\cal V}}\hspace{1mm}\bigcap^{M}_{i=1}{\cal R}_{i}({\bar v})
	\end{align}
\end{theorem}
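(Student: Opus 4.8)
The plan is to prove achievability for each fixed valid pair $\bar v=(\bar p,\bar r)\in{\cal V}$ separately; since any rate pair lying in $\bigcap_{i=1}^{M}{\cal R}_{i}(\bar v)$ for some $\bar v$ is then achieved by the scheme tailored to that $\bar v$, and achievability is closed under unions, the full region $\bigcup_{\bar v\in{\cal V}}\bigcap_{i=1}^{M}{\cal R}_{i}(\bar v)$ follows at once. Fix $\bar v$. I would transmit over $B$ blocks of length $n$, with each source $s\in\{1,M\}$ splitting its message into independent sub-messages $w_{s}^{(1)},\ldots,w_{s}^{(B-1)}$ and sending a dummy $w_{s}^{(B)}$ to flush the pipeline, so the effective rate is $\tfrac{B-1}{B}R_{s}\to R_{s}$ and the rate penalty vanishes as $B\to\infty$. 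Because the theorem concerns a product distribution $p(x_{1})\cdots p(x_{M})$, every node draws its codebook independently according to its own marginal; the ``superposition'' that produces the conditional mutual-information terms lives entirely in the message history that indexes each transmitted codeword, not in any correlation between codebooks.

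The rank assignment prescribes a block-Markov decode-forward schedule in which a node of higher rank resolves each sub-message with smaller block-delay, so that the unique local minimum of ${\rm rank}(\cdot)$ along each path is the last node to decode it. The validity of $\bar v$ is exactly the consistency condition for this schedule: since ${\rm rank}(\cdot)$ decreases to a single minimum and then increases along each of $\bar p_{1}$ and $\bar p_{M}$, the decoding deadlines are totally ordered with no circular dependence, and each node's transmitted codeword can be indexed by precisely the sub-messages it has already decoded. The reference node ${\rm ref}(i)$, the highest-ranked one-hop upstream predecessor of relay $i$, marks the boundary between two decoding modes at $i$: among the predecessors $P_{i}(S)$, those of higher rank ($A_{i}(S)$) can be peeled off successively, whereas those of rank at most ${\rm rank}({\rm ref}(i))$ ($B_{i}(S)$) have merged with the opposing flow and must be decoded jointly; and, by the schedule, the non-predecessor signals of rank at most ${\rm rank}({\rm ref}(i))$ ($\tilde B_{i}(S)\subseteq\tilde P_{i}(S)$) carry sub-messages $i$ has already decoded and so serve as side information.

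Decoding proceeds by backward decoding, which cleanly accommodates the convergence of the two flows at the minimum-rank node. At a relay $i\notin{\cal S}$ the decoder processes a given generation of sub-messages in two stages matching the two terms of (\ref{lowerbound}). First it jointly decodes the below-reference predecessor sub-messages $B_{i}(S)$, conditioning on the already-known signals $X_{\tilde B_{i}(S)}$ and treating the still-unknown above-reference signals as noise; a standard joint-typicality argument bounds this stage's error by $R_{S}<I(X_{B_{i}(S)};Y_{i}\mid X_{\tilde B_{i}(S)})$. It then peels off the above-reference sub-messages successively in increasing rank order, decoding each $X_{j}$, $j\in A_{i}(S)$, with all lower-rank codewords $X_{L(j)}$ already resolved (from the first stage, from earlier-decoded blocks, and from the previously peeled $A$-layers), contributing the sum $\sum_{j\in A_{i}(S)}I(X_{j};Y_{i}\mid X_{L(j)})$. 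Taking the union bound over all non-empty $S\subseteq{\cal S}$ and over the $B$ blocks shows the relay's error probability vanishes precisely when (\ref{lowerbound}) holds for every such $S$. At a source $i\in{\cal S}$ the node already possesses its own message, the reference structure collapses, and the same argument reduces to the single-stage cut constraint (\ref{decodeforwardupperbound}) over $S\subseteq{\cal S}\setminus i$.

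The crux of the argument, and the step I expect to be hardest, is not any individual typicality bound but verifying that the rank-driven schedule is globally well-posed and that its error events decompose exactly into the two stages above. Concretely, I must show --- invoking validity --- that whenever relay $i$ decodes a given generation, every codeword it conditions upon is genuinely known: that $X_{\tilde B_{i}(S)}$ indeed carries only sub-messages already decoded by $i$, and that for each $j\in A_{i}(S)$ the entire set $L(j)$ consists of nodes that are either below the reference (hence side information), or above-reference predecessors of lower rank already peeled in the successive stage, with no unknown node of intermediate rank ever forced into the conditioning. This bookkeeping across all nodes, blocks, and subsets $S$ --- together with the check that the induced list of typicality constraints is verbatim (\ref{lowerbound}) and (\ref{decodeforwardupperbound}) --- is where the unimodality of the \emph{valid} rankings does the real work; once it is established, the error analysis is routine. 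Ranging over all $\bar v\in{\cal V}$ then yields the claimed region.
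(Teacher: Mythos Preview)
Your high–level plan (fix $\bar v$, build a rank–driven block–Markov schedule, and argue that the per–node error events reproduce (\ref{lowerbound}) and (\ref{decodeforwardupperbound})) is sound, and your reading of the orthant sets $A_i(S),B_i(S),\tilde B_i(S)$ is exactly right. But the route you sketch differs from the paper's in two substantive ways, and one of them hides the real difficulty.

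\emph{First difference: induction versus direct analysis.} The paper does not analyze a generic relay $i$ directly. It proves a standalone lemma for a \emph{biased multiple–access relay channel} (BMARC): the channel seen by the highest–ranked relay, where all upstream nodes on one side carry already–decoded messages and the other side is a fresh multi–hop chain. That lemma delivers the full MAC–type region $R_S<I(X_{B_i(S)};Y_i\mid X_{\tilde B_i(S)})$ for the top–rank relay (where $A_i(S)=\emptyset$). The theorem then follows by induction on $M$: start from the three lowest–rank nodes (a two–way one–relay channel), and repeatedly perform a left– or right–extension that appends the next–highest–rank node as a new source. Validity of $\bar r$ is precisely what guarantees such a sequence of extensions exists. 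At each step the new penultimate node sees a BMARC, and every previously existing relay simply picks up one extra summand $I(X_M;Y_i\mid X_{L(M)})$ because it must decode the old–direction message before the new source's. Your ``globally well–posed schedule'' bookkeeping is thus replaced by a one–node–at–a–time argument in which the only nontrivial decoding analysis is the BMARC lemma.

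\emph{Second difference: how the below–reference term is obtained.} You propose a single backward–decoding pass in which the $B_i(S)$ layer is jointly decoded in one shot. The paper instead shows that no single sliding–window decoder achieves the whole MAC region at the top–rank relay; rather, for each target pair $(R_1,R_M)$ in that region there is an index $i\in\{3,\ldots,M\}$ and a corresponding choice of decoding delays $(\tilde d_{\cdot,1},\tilde d_{\cdot,M})$ that works, and the region is covered by the \emph{union} of these schemes. This is exactly the ``fundamental tension'' the introduction advertises: which direction's message the relay resolves first determines which corner of the MAC region it attains. Your backward–decoding claim may well collapse this family into one scheme, but in the two–way setting (both endpoints inject fresh messages every block, so neither end of the block sequence is ``clean'') this is not automatic; you would need to exhibit the backward pass explicitly and check that at every block the conditioning set is genuinely known. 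As written, the step ``a standard joint–typicality argument bounds this stage's error by $R_S<I(X_{B_i(S)};Y_i\mid X_{\tilde B_i(S)})$'' is asserting precisely the content of the paper's Lemma~\ref{biglemma} without proof.

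A minor clarification on your ``two stages'': the summands $I(X_j;Y_i\mid X_{L(j)})$ for $j\in A_i(S)$ do not come from decoding \emph{different} sub–messages after the joint step; they come from using \emph{additional earlier blocks} in which the same source message was transmitted by higher–rank predecessors, with everything of lower rank already resolved at that block. Phrasing it as ``peeling off sub–messages'' risks obscuring that it is a multi–block sliding–window combination, not successive interference cancellation within one block.
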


\section{Proof of Theorem \ref{maintheorem}}
\label{sectionproof}
The proof is based on the block-markov, decode-forward framework in which the transmissions are divided into $B$ blocks of $T$ channel uses.  Let $R_{1}$ and $R_{M}$ denote the rates at which nodes 1 and $M$ transmit information to each other.  

\subsection{Codebook Generation}
For node 1, independently generate $2^{TR_{1}}$ i.i.d $T$-sequences ${\bar x}_{1}:=(x_{1,1},\ldots,x_{1,T})$ in ${\cal X}^{T}_{1}$ according to $p(x_{1})$.  Index them as ${\bar x}_{1}(w_{1})$, $w_{1}\in\{1,2,\ldots,2^{TR_{1}}\}$.  For node $i=2,\ldots,M-1$, independently generate $2^{T(R_{1}+R_{M})}$ i.i.d $T$-sequences ${\bar x}_{i}:=(x_{i,1},\ldots,x_{i,T})$ in ${\cal X}^{T}_{i}$ according to $p(x_{i})$.  Index them as ${\bar x}_{i}(w_{i})$, $w_{i}\in\{1,2,\ldots,2^{T(R_{1}+R_{M})}\}$.  For node $M$, independently generate $2^{TR_{M}}$ i.i.d $T$-sequences ${\bar x}_{M}:=(x_{M,1},\ldots,x_{M,T})$ in ${\cal X}^{T}_{M}$ according to $p(x_{M})$.  Index them as ${\bar x}_{M}(w_{M})$, $w_{M}\in\{1,2,\ldots,2^{TR_{M}}\}$.

\subsection{Encoding}
In each block $b\in1,\ldots,B$, nodes 1 and $M$ generate message indices $w_{1}(b)\in\{1\,\ldots,2^{TR_{1}}\}$ and $w_{M}(b)\in\{1\,\ldots,2^{TR_{M}}\}$ and transmit the $T$-sequences ${\bar x}_{1}(w_{1}(b))$ and ${\bar x}_{M}(w_{M}(b))$ respectively.  Simultaneously, each relay node $i$ chooses a message index $w_{i}(b)\in\{1,\ldots,2^{T(R_{1}+R_{M})}\}$ and transmits the $T$-sequence ${\bar x}_{i}(w_{i}(b))$.  The index $w_{i}(b)$ corresponds to a unique message pair $(w_{1}(b-d_{i,1}),w_{M}(b-d_{i,M}))$ where $d_{i,1}$ and $d_{i,M}$, referred to as the \textit{encoding delays}, are strictly positive integers.  The \textit{encoding scheme} specifies the encoding delays at the relay nodes.  At the end of each block $b$, every relay node $i$ decodes the message pair $(w_{1}(b-\tilde{d}_{i,1}),w_{M}(b-\tilde{d}_{i,M}))$ where the \textit{decoding delays}, $\tilde{d}_{i,1}$ and $\tilde{d}_{i,M}$ are strictly positive integers.  The \textit{decoding scheme} specifies the decoding delays at the relay nodes.  The relay cannot encode any message pair that it has not already decoded, so $d_{i,s}>\tilde{d}_{i,s}$.  Apart from this causality constraint, the messages decoded in one block need not determine the messages encoded in the next block; there may be many decoding schemes causally consistent with a fixed encoding scheme.    

Every valid path-rank-assignment generates an encoding scheme and a corresponding set of causally-consistent decoding schemes.  For each $s\in{\cal S}$, let $\tilde{s}:={\cal S}\setminus s$.  For any ${\bar v}\in{\cal V}$ and each $s\in{\cal S}$ and $i\in\{1,\ldots,M\}\setminus\tilde{s}$, define $f_{i,s}$ as follows: 
\begin{align}
	\label{f}
	f_{i,s}&:=
	\begin{cases}
		\displaystyle\displaystyle\sum_{k\in B_{i}(\tilde{s})}f_{k,\tilde{s}} &\text{if}\ i={\rm ref}(p_{s,i+1}),\\
		1 &\text{otherwise}
	\end{cases}
\end{align}
For each $s\in{\cal S}$ and $i\in\{2,\ldots,M-1\}$, define $d_{i,s}$ as follows:
\begin{align}
	\label{landr}
	d_{i,s}&:=\sum_{j\in P_{i}(s)}f_{j,s}.
\end{align}
\begin{example}
	Let $M=4$, ${\bar p}_{1}:=(1,2,3,4)$, ${\bar p}_{4}:=(4,3,2,1)$, and ${\bar r}:=(4,2,1,3)$.  Then $f_{1,1}=f_{4,4}+f_{3,4}+f_{2,4}$, $f_{2,1}=f_{3,4}$, $f_{3,1}=1$, $f_{4,4}=f_{2,1}+f_{3,1}$, $f_{3,4}=1$, $f_{2,4}=1$.  Expanding gives $f_{1,1}=4$, $f_{2,1}=1$, and $f_{4,4}=2$.  Therefore $d_{2,1}=f_{1,1}=4$, $d_{3,1}=f_{1,1}+f_{2,1}=5$, $d_{3,4}=f_{4,4}=2$, and $d_{2,4}=f_{4,4}+f_{3,4}=3$.
\end{example}
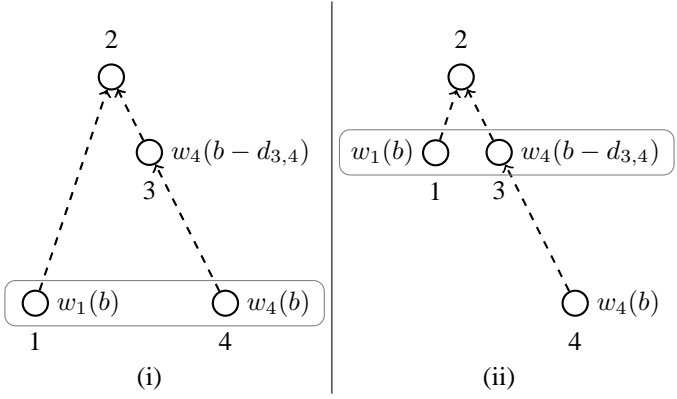
\begin{figure}
		\begin{center}
		\begin{tikzpicture}
			\node[circle,draw,thick,minimum size=0.75mm] (A) at  (0.5,0.5) {};
			\node[circle,draw,thick,minimum size=0.75mm] (B) at  (1.5,3.5) {};
			\node[circle,draw,thick,minimum size=0.75mm] (C) at  (2,2.5) {};
			\node[circle,draw,thick,minimum size=0.75mm] (D) at  (3,0.5) {};
			\draw[dashed,->] (A) -- (B);
			\draw[dashed,->] (C) -- (B);
			\draw[dashed,->] (D) -- (C);
			\node at (0.5,0) {1};
			\node at (1.20,0.5) {$w_{1}(b)$};
			\node at (3,0) {4};
			\node at (3.7,0.5) {$w_{4}(b)$};
			\node at (2,2) {3};
			\node at (3.2,2.5) {$w_{4}(b-d_{3,4})$};
			\node at (1.5,4) {2};
			\node[circle,draw,thick,minimum size=0.75mm] (F) at  (6.1,3.5) {};
			\node[circle,draw,thick,minimum size=0.75mm] (G) at  (6.6,2.5) {};
			\node[circle,draw,thick,minimum size=0.75mm] (H) at  (7.6,0.5) {};
			\draw[dashed,thick,->] (A) -- (B);
			\draw[dashed,thick,->] (C) -- (B);
			\draw[dashed,thick,->] (D) -- (C);
			\coordinate (E) at (5.1,0.5);
			\path (E) -- (F)
  			   node[circle, draw, thick,  pos=0.705, minimum size=0.75mm](I){};
			\draw[dashed,thick,->] (I) -- (F);
			\draw[dashed,thick,->] (G) -- (F);
			\draw[dashed,thick,->] (H) -- (G);
			\node at (6.6,2) {3};
			\node at (7.8,2.5) {$w_{4}(b-d_{3,4})$};
			\node at (7.6,0) {4};
			\node at (8.3,0.5) {$w_{4}(b)$};
			\node at (6.1,4) {2};
			\path (I) ++(0,-0.5) node {1};
			\path (I) ++(-0.7,0) node {$w_{1}(b)$};
			\draw (4.4,-0.7) -- (4.4,4.5);
			\node at (2,-0.5) {(i)};
			\node at (6.6,-0.5) {(ii)};
			\draw[gray,rounded corners] (0.1, 0.20) rectangle (4.3, 0.8) {};
			\draw[gray,rounded corners] (4.5,2.2) rectangle (8.9,2.8) {};
		\end{tikzpicture}
	\end{center}
	\caption{An encoding scheme and two decoding schemes that cumulatively recover ${\cal R}_{2}(\bar{p})$, where ${\bar p}_{1}=(1,2)$ and ${\bar p}_{4}:=(4,3,2)$.  In block $b$ nodes 1, 3, and 4 encode $w_{1}(b)$, $w_{4}(b-d_{3,4})$, and $w_{4}(b)$ respectively.  In decoding scheme (i), $w_{1}(b)$ and $w_{4}(b)$ are jointly decoded in block $b+d_{3,4}$.  In decoding scheme (ii) $w_{1}(b+d_{3,4})$ and $w_{4}(b)$ are jointly decoded in block $b+d_{3,4}$.}
\label{figure1}
\end{figure}

\subsection{Decoding and the Analysis of the Probability of Error}
It remains to show that there exists a set of causally-consistent decoding delay pairs $\{(\tilde{d}_{i,1},\tilde{d}_{i,M})\}$ for each $i=2,\ldots,M-1$ that allow node $i$ to recover any rate pair in ${\cal R}_{i}({\bar v})$.  First, we prove a preliminary lemma.  Define a multiple-access relay channel consisting of $M$ nodes, where node $1$ and node $M$ are sources, node $2$ is a destination, and nodes $3\ldots,M-1$ are relay nodes.  Assume block-Markovian encoding; in each block $b$ of $T$ channel uses, node $i\in\{1,M\}$ generates the message $w_{i}(b)\in\{1,\ldots,2^{TR_{i}}\}$.  Furthermore, suppose a genie reveals the message $w_{M}(b-d_{i,M})$ to relay node $i$ just before block $b$, where $d_{i,M}$ is a fixed constant for each $i$.  Hence, relay node $i$ encodes $w_{M}(b-d_{i,M})$ in block $b$.  The path vector ${\bar p}$ is defined as ${\bar p}_{1}:=(1,2)$ and ${\bar p}_{M}:=(M,M-1,\ldots,4,3,2)$.  This definition of ${\bar p}$ implies that $d_{i,M}>d_{i+1,M}$ for each relay node $i$.  This channel will be called a biased multiple-access relay channel (BMARC) since the relays only help one source and not the other.  The encoding scheme is illustrated in Figure \ref{figure1} for a BMARC of size $M=4$.  Let ${\cal R}_{2}({\bar p})$ be the set of rate pairs $(R_{1},R_{M})$ that satisfy (\ref{decodeforwardupperbound}) for all non-empty $S\subseteq{\cal S}$ for the BMARC with path ${\bar p}:=\{{\bar p}_{1},{\bar p}_{M}\}$.
\begin{lemma}
	\label{biglemma}
	Given any product distribution $p(x_{1})\cdots p(x_{M})$ the rate pairs in ${\cal R}_{2}({\bar p})$ are achievable for the BMARC.
\end{lemma}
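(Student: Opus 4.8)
The plan is to establish the achievability of ${\cal R}_{2}({\bar p})$ for the BMARC via a standard block-Markov decode-forward argument, but carefully exploiting the genie-revealed side information about $w_{M}$ and the strict ordering $d_{i,M}>d_{i+1,M}$ of the encoding delays. Since the genie reveals $w_{M}(b-d_{i,M})$ to each relay $i$ before block $b$, the relays already know the message from source $M$ and act purely as a chain of cooperating encoders that assist source $1$; the only genuine decoding burden at the destination (node $2$) is to recover the fresh message $w_{1}(b)$ together with whatever portion of $w_{M}$ has not yet been resolved. Thus I would reduce the problem to showing that node $2$ can jointly decode the new information from both sources at the claimed rates, treating the relay transmissions as known superposition layers wherever possible.

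First I would fix the codebook generation and the block-Markov encoding exactly as described in the preceding subsections, so that in block $b$ source $1$ transmits ${\bar x}_{1}(w_{1}(b))$, source $M$ transmits ${\bar x}_{M}(w_{M}(b))$, and each relay $i\in\{3,\ldots,M-1\}$ transmits ${\bar x}_{i}(w_{M}(b-d_{i,M}))$. Next I would specify the decoding schedule at node $2$: I would have node $2$ attempt to decode the pair $(w_{1}(b),w_{M}(b-d_{2,M}))$ at the end of a suitably chosen block, using joint typicality over the appropriate window of received sequences. The key structural fact is that because $d_{i,M}>d_{i+1,M}$, the messages of source $M$ propagate inward along the relay chain with a strict delay gradient, so at the moment node $2$ decodes, all relays closer to source $M$ have already forwarded the relevant $w_{M}$ index; this lets me write the error events as a nested family indexed by the nonempty subsets $S\subseteq{\cal S}$.

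The core of the argument is the error-probability analysis. For each nonempty $S\subseteq\{1,M\}$ I would bound the probability that node $2$ chooses an incorrect set of indices for exactly the sources in $S$ while the indices for ${\cal S}\setminus S$ are correct. Conditioning on the correct (hence known) sequences $X_{\tilde{P}_{2}(S)}$ and applying the standard joint-typicality packing lemma, the number of competing codewords is $2^{TR_{S}}$ and the typicality constraint contributes $I(X_{P_{2}(S)};Y_{2}\mid X_{\tilde{P}_{2}(S)})$, so the error probability vanishes as $T\to\infty$ precisely when $R_{S}<I(X_{P_{2}(S)};Y_{2}\mid X_{\tilde{P}_{2}(S)})$, which is exactly constraint~(\ref{decodeforwardupperbound}). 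Taking the union bound over the finitely many subsets $S$ and over the blocks yields an arbitrarily small total probability of error whenever $(R_{1},R_{M})\in{\cal R}_{2}({\bar p})$.

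I expect the main obstacle to be bookkeeping rather than a deep conceptual difficulty: one must verify that the delay structure $d_{i,M}>d_{i+1,M}$ makes the decoding schedule \emph{causally consistent}, i.e.\ that every relay sequence appearing in the joint-typicality test at node $2$ corresponds to a $w_{M}$ index the genie has already revealed and the relay has already transmitted, so that the conditioning on $X_{\tilde{P}_{2}(S)}$ is legitimate and no circular dependency arises. A secondary subtlety is confirming that the product-distribution assumption $p(x_{1})\cdots p(x_{M})$ guarantees the independence needed for the packing lemma to apply term-by-term, and that the identification of $P_{2}(S)$ and $\tilde{P}_{2}(S)$ under the BMARC path ${\bar p}_{1}=(1,2)$, ${\bar p}_{M}=(M,\ldots,2)$ correctly matches the mutual-information expression in~(\ref{decodeforwardupperbound}). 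Once these consistency checks are in place, the vanishing of the error probability follows from the routine typicality estimates.
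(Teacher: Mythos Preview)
Your proposal has a genuine gap: a single decoding schedule cannot recover all of ${\cal R}_2(\bar p)$ for the BMARC, and the paper's proof hinges precisely on using a \emph{family} of decoding delays together with a covering argument.

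Concretely, you misread the role of the relays. In the BMARC the relays carry $w_M$, not $w_1$ (recall $\bar p_1=(1,2)$, so source~1 has no relay help). In block $b$ relay $i$ transmits $\bar x_i(w_M(b-d_{i,M}))$ and source $M$ transmits the \emph{fresh} index $w_M(b)$. Now examine the two single-letter constraints you need. To get $R_1<I(X_1;Y_2\mid X_2,\ldots,X_M)$, node~2 must know $X_M$ in the block where it decodes $w_1(b)$, i.e.\ it must already have decoded $w_M(b)$. Conversely, to get $R_M<I(X_3,\ldots,X_M;Y_2\mid X_1,X_2)$, node~2 must accumulate $w_M(b)$ over the blocks $b,b+d_{M-1,M},\ldots,b+d_{3,M}$ while conditioning on $X_1$ in each of them, i.e.\ it must already have decoded $w_1(b+d_{i,M})$ for every $i$. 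These two requirements are circular; no single choice of $(\tilde d_{2,1},\tilde d_{2,M})$ resolves both, so the conditioning on $X_{\tilde P_2(S)}$ that you invoke in your packing-lemma step is not legitimate for all three subsets $S$ simultaneously. The ``bookkeeping'' you flag as the only obstacle is in fact the entire content of the lemma.

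The paper resolves this by introducing, for each $i\in\{3,\ldots,M\}$, a decoding scheme in which node~2 recovers $(w_1(b+d_{i,M}),w_M(b))$ at block $b+d_{3,M}$, yielding the trapezoidal region (8)--(10): $R_1<I(X_1;Y_2\mid X_2\ldots X_i)$, $R_M<I(X_3\ldots X_{i-1};Y_2\mid X_2)+I(X_i\ldots X_M;Y_2\mid X_1\ldots X_{i-1})$, and the common sum-rate bound. None of these regions equals ${\cal R}_2(\bar p)$, but the paper then proves by a short downward induction on $i$ (starting at $i=M$ and stepping to $i-1$ whenever (9) fails) that every point of ${\cal R}_2(\bar p)$ lies in at least one of them. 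That covering argument, not a direct MAC-type packing bound, is what you are missing.
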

\begin{proof}
	First we show that node 2 can decode $(w_{1}(b+d_{i,M}),w_{M}(b))$ in block $b+d_{3,M}$ if $(R_{1},R_{M})$ satisfies:
	\begin{align}
		\label{R1}
		R_{1}&<I(X_{1};Y_{2}|X_{2}\ldots X_{i})\\
		\label{RM}
		R_{M}&<I(X_{3}\ldots X_{i-1};Y_{2}|X_{2})\\
		\nonumber
		&\hspace{3mm}+I(X_{i}\ldots X_{M};Y_{2}|X_{1}X_{2}X_{3}\ldots X_{i-1})\\
		\label{R1RM}
		R_{1}+R_{M}&<I(X_{-2};Y_{2}|X_{2}),
	\end{align}
	where $X_{-2}:=\{X_{j}: 1\leq j\leq M, j\neq2\}$.  Consider the decoding of $w_{1}(b+d_{i,M})$.  Since information flows from $M$ to node 2, and $w_{M}(b)$ is transmitted by node $i$ in block $b+d_{i,M}$, nodes $2,\ldots,i$ transmit messages already decoded by node 2 and nodes $i+1,\ldots,M$ transmit messages that are new.  Hence, the mutual information in (\ref{R1}) is conditioned on nodes $X_{2},\ldots,X_{i}$.  Next consider the decoding of $w_{M}(b)$.  The messages transmitted by node 1 and nodes $i,\ldots,M$ during the blocks in which $w_{M}(b)$ is transmitted by nodes $3,\ldots,i-1$, have not been decoded by node 2.  Hence, the first mutual information in (\ref{R1}) is conditioned only on $X_{2}$.  On the other hand, the messages transmitted by nodes 1 and nodes $2,\ldots,i-1$ have been decoded by node 2 during the blocks in which nodes $i,\ldots,M$ transmit $w_{M}(b)$.  Hence, the second mutual information in (\ref{RM}) is conditioned on $X_{1},\ldots,X_{i-1}$.  It follows that (\ref{R1})-(\ref{R1RM}) is achievable.
 
	Next we show that for every $(R_{1},R_{M})$ in ${\cal R}_{2}({\bar p})$ there is some $i$, $3\leq i\leq M$ such that $(R_{1},R_{M})$ satisfies (\ref{R1})-(\ref{R1RM}).  The proof is by induction.  Consider $i=M$.  If $(R_{1},R_{M})$ satisfies (\ref{R1})-(\ref{R1RM}), then we are done.  Suppose otherwise.  Since (\ref{R1}) and (\ref{R1RM}) are boundaries of ${\cal R}_{2}({\bar p})$ at $i=M$ it follows that $R_{M}>I(X_{3},\ldots X_{M-1};Y_{2}|X_{2})+I(X_{M};Y_{2}|X_{1}\ldots X_{M-1})$.  But this together with (\ref{R1RM}) implies that $R_{1}<I(X_{1};Y_{2}|X_{2}\ldots X_{M-1})$ which satisfies (\ref{R1}) for $i=M-1$.

	Next, consider any $3<i<M$.  If $(R_{1},R_{M})$ satisfies (\ref{R1})-(\ref{R1RM}), then we are done.  Suppose otherwise.  By the inductive hypothesis, $R_{1}$ satisfies (\ref{R1}).  Furthermore (\ref{R1RM}) is a boundary of ${\cal R}_{2}({\bar p})$.  It follows that $R_{M}>I(X_{3}\ldots X_{i-1};Y_{2}|X_{2})+I(X_{i}\ldots X_{M};Y_{2}|X_{1}\ldots X_{i-1})$.  But this together with (\ref{R1RM}) implies that $R_{1}<I(X_{1};Y_{2}|X_{2}\ldots X_{i-1})$ which satisfies (\ref{R1}) for $i-1$.  The argument terminates at $i=3$ since (\ref{RM}) and (\ref{R1RM}) are boundaries of ${\cal R}_{2}({\bar p})$ at $i=3$.  Thus the lemma is proved.
\end{proof}

Figure \ref{figure1} illustrates the proof of Lemma \ref{biglemma} for $M=4$.  Finally, we show for any ${\bar v}\in{\cal V}$, the region $\bigcap^{M}_{i=1}{\cal R}_{i}({\bar v})$ is achievable.  The proof is by induction.  Consider, the two-way one-relay channel where node 1 and 3 are the sources and node 2 is the relay.  The paths are defined as ${\bar p}_{1}:=(1,2,3)$ and ${\bar p}_{3}:=(3,2,1)$.  In \cite{Xie2007} the region $\bigcap^{3}_{i=1}{\cal R}_{i}({\bar p})$ is shown to be achievable.  It is straightforward to check that there are four valid rankings for this channel, $(3,2,1)$, $(1,2,3)$, $(3,1,2)$, and $(2,1,3)$, and that the region in Theorem \ref{maintheorem} is the same as the region in \cite{Xie2007}. 

The first step of the induction is to perform a left or right extension of the two-way one-relay channel to create a two-way two-relay channel.  Without loss of generality, choose a right-side extension where node 3 becomes a relay and node 4 is added as a source and given rank 4.  The new paths are ${\bar p}_{1}:=(1,2,3,4)$ and ${\bar p}_{4}:=(4,3,2,1)$.  By the design of (\ref{f}) and (\ref{landr}), node 3 waits to receive the message from node 1 before encoding the message simultaneously transmitted by node 4.  Since node 3 knows all of the source 1 messages to its right, and all of the source 4 messages to its left, the channel it sees is the BMARC of Lemma \ref{biglemma} (or a reflection of it).  It follows from Lemma \ref{biglemma} that node 3 can recover all the rate pairs in ${\cal R}_{3}({\bar p})$, which is equivalent to the region defined by $R_{S}<I(X_{B_{3}(S)};Y_{3}|X_{\tilde{B}_{3}(S)})$ for all non-empty $S\subseteq{\cal S}$.  Note that $A_{3}(S)=\{\}$ for all non-empty $S\subseteq{\cal S}$.

Since node 2 does not know the transmissions from node 4 a priori, the previous mutual informations that describe the contributions of nodes 1 and 3 remain unchanged; they do not include $X_{4}$.  These contributions are expressed by $I(X_{B_{2}(S)};Y_{2}|X_{\tilde{B}_{2}(S)})$ for all non-empty $S\subseteq{\cal S}$.  The right-side extension forces node 2 to decode $w_{1}(b)$ before $w_{4}(b)$.  As a result, the contribution of node 4 as seen by node 2 is $I(X_{4};Y_{2}|X_{1}X_{2}X_{3})$ or equivalently, $I(X_{4};Y_{2}|X_{L(4)})$.  Hence relay node $i\in\{2,3\}$ can recover any rate pair in the region defined by $R_{S}<\displaystyle\sum_{j\in A_{i}(S)}I(X_{j};Y_{i}|X_{L(j)})+I(X_{B_{i}(S)};Y_{i}|X_{\tilde{B}_{i}(S)})$ for all non-empty $S\subseteq{\cal S}$.  To finish the first inductive step, observe that node 1 and node 4 each see a one-way multiple-relay channel, which is a simple BMARC.  Therefore it follows from Lemma \ref{biglemma} that node $i\in\{1,4\}$ can decode at any rate in ${\cal R}_{i}({\bar p})$ as defined by (\ref{decodeforwardupperbound}).  Thus the first inductive step is proved.

Now assume by induction that the Theorem is true for $M-1$.  We will show it must be true for $M$.  Without loss of generality, consider a right-side extension that changes node $M-1$ from a source into a relay node and adds a source node $M$ where ${\rm rank}(M)=M$.  The new paths are ${\bar p}_{1}:=(1,2,3,\ldots,M)$ and ${\bar p}_{M}:=(M,M-1,M-2,\ldots,2,1)$.  By the design of (\ref{f}) and (\ref{landr}), node $M-1$ waits to receive the message from node 1 before encoding the message simultaneously transmitted by node $M$.  Since node $M-1$ knows all of the source 1 messages to its right, and all of the source $M$ messages to its left, the channel it sees is the BMARC of Lemma \ref{biglemma} (or a reflection of it).  It follows from Lemma \ref{biglemma} that node $M$ can recover all the rate pairs in ${\cal R}_{M-1}({\bar p})$, which is equivalent to the region defined by $R_{S}<I(X_{B_{M-1}(S)};Y_{M-1}|X_{\tilde{B}_{M-1}(S)})$ for all non-empty $S\subseteq{\cal S}$.  Note that $A_{M-1}(S)=\{\}$ for all non-empty $S\subseteq{\cal S}$.

Since relay node $i\in\{2,\ldots,M-2\}$ does not know the transmissions from node $M$ a priori, the previous mutual informations that describe the contributions of nodes $1\ldots,M-1$ remain unchanged; they do not include $X_{M}$.  These contributions are expressed by $\displaystyle\sum_{j\in A_{i}(S)\setminus M}I(X_{j};Y_{i}|X_{L(j)})+I(X_{B_{i}(S)};Y_{i}|X_{\tilde{B}_{i}(S)})$ for all non-empty $S\subseteq{\cal S}$.  The right-side extension forces relay node $i$ to decode $w_{1}(b)$ before $w_{M}(b)$.  As a result, the contribution of node $M$ as seen by node $i$ is $I(X_{M};Y_{i}|X_{1}\ldots X_{M-1})$ or equivalently, $I(X_{M};Y_{i}|X_{L(M)})$.  Hence relay node $i\in\{2,\ldots,M-1\}$ can recover any rate pair in the region defined by $R_{S}<\displaystyle\sum_{j\in A_{i}(S)}I(X_{j};Y_{i}|X_{L(j)})+I(X_{B_{i}(S)};Y_{i}|X_{\tilde{B}_{i}(S)})$ for all non-empty $S\subseteq{\cal S}$.  To finish the inductive step, observe that node 1 and node $M$ each see a one-way multiple-relay channel which is a simple BMARC.  Therefore it follows from Lemma \ref{biglemma} that node $i\in\{1,M\}$ can decode at any rate in ${\cal R}_{i}({\bar p})$ as defined by (\ref{decodeforwardupperbound}).  

The definition of a valid path-rank-assignment guarantees that we can always start with the three nodes of lowest rank and reach the general two-way $M$-relay channel by a sequence of left and right extensions with nodes of successively higher rank.  Thus Theorem \ref{maintheorem} is proved.  

\section{Concluding Remarks}
\label{conclusion}
We showed that the rate regions of all interesting decode-forward schemes are different realizations of a single expression that depends on a rank assignment.  This discovery makes it possible to characterize the complete achievable rate region.  It remains to be seen whether some version of Theorem \ref{maintheorem} is also true for multi-source, multi-relay, all-way channels.

\section{Acknowledgments}
The authors would like to thank Xiugang Wu for helpful discussions on random binning and network coding.  This material is based upon work partially supported by NSF Contract CNS-1302182,  AFOSR Contract FA9550-13-1-0008, and NSF Science \& Technology Center Grant CCF-0939370.

\bibliographystyle{IEEEtran}
\bibliography{ResubmitISIT2016}

\end{document}